\newcolumntype{V}[1]{>{\small \raggedright}p{#1}}
\newcolumntype{N}{>{\small}l}
\newtheorem{definition}{Definition}
\newtheorem{theorem}{Theorem}
\title{Improving BDD Based Symbolic Model Checking with Isomorphism Exploiting Transition Relations}
\author{Christian Appold
\institute{Chair of Computer Science V\\
University of Würzburg\\
Würzburg, Germany}
\email{appold@informatik.uni-wuerzburg.de}
%\and
%Co Author \qquad\qquad Yet S. Else
%\institute{Stanford Univeristy\\
%California, USA}
%\email{\quad is@gmail.com \quad\qquad somebody@else.org}
}
\begin{document}
\maketitle

\begin{abstract}
Symbolic model checking by using BDDs has greatly improved the applicability of model checking. Nevertheless, BDD based symbolic model checking can still be very memory and time consuming. One main reason is the complex transition relation of systems. Sometimes, it is even not possible to generate the transition relation, due to its exhaustive memory requirements. To diminish this problem, the use of partitioned transition relations has been proposed. However, there are still systems which can not be verified at all. Furthermore, if the granularity of the partitions is too fine, the time required for verification may increase. In this paper we target the symbolic verification of asynchronous concurrent systems.  For such systems we present an approach which uses similarities in the transition relation to get further memory reductions and runtime improvements. By applying our approach, even the verification of systems with an previously intractable transition relation becomes feasible.
\end{abstract}

\section{Introduction}
The presence of concurrent software is steadily increasing due to the shift towards multi-core CPUs. This software consists of several parallel threads, which are executed asynchronously and interleaved. Some models for inter-thread communication exist, but the most flexible and prominent one is the use of fully shared variables. Well-known programming APIs like the POSIX pthread model or the WIN32 API support this model of communication. Unfortunately, concurrent software often is very error-prone, and bugs tend to be subtle and are hard to detect. Thus, to enable its use in safety-critical areas, reliable techniques to verify the correct operation of concurrent software are mandatory. One formal verification technique which has been proven to be successful in the verification of concurrent systems is temporal logic model checking  \cite{TempModCheckEmCl}, \cite{TempModCheckQuSif}. There, desired properties of a system are formulated in a temporal logic (like CTL \cite{BranchingTimeLogic} or LTL \cite{LTLPnueli}), and the state-space of the system is investigated exhaustively to validate these properties. A very effective model checking technique is symbolic model checking \cite{ClarkePeledGrumModCheck00}, \cite{SymbModCheckMcM93} based on Binary Decision Diagrams (BDDs) \cite{GraphBasedAlgorithmsBryant86}.

Nevertheless, BDD-based model checking is often still very memory and time consuming. This sometimes circumvents the successful verification of systems. The main reason for the large memory requirements of symbolic model checking is often the huge size of the BDD representing the transition relation. Therefore, some methods have been proposed to diminish this problem. Originally a monolithic transition relation consisting of a single BDD was used. Due to the large size of this BDD, the authors of \cite{SymbModCheckWiPartTransRel91} suggested to use partitioned transition relations.  There, the transition relation is split into several pieces and each of these pieces can often be represented by a small BDD. Pieces of partitioned transition relations of asynchronous systems frequently  possess many identity patterns for identity transformations of state variables. In \cite{EffSollMatrixDiagMill01} and \cite{SatuBasedReachConDis05} the removal of such identity patterns has been suggested to reduce the memory overhead. In this paper we target the symbolic verification of asynchronous concurrent systems, like e.g. concurrent software. We present a new memory saving approach to store the   transition relation with BDDs. It allows to exploit similarities in the BDDs of the component transition relations. Additionally, identity patterns are removed, too. Furthermore, we introduce an algorithm that enables the efficient use of our new technique for model checking.  Our experimental results show (see section \ref{sec:ExpRes}) that this can lead to significant memory and runtime improvements. The approach is not restricted to asynchronous systems, but can be used for synchronous systems as well. To our knowledge, this is the first paper where similarities in the transition relation of components of a system are exploited that way.

The rest of this paper is organized as follows. In the next section we present some background information. We introduce our model of an asynchronous concurrent system (\ref{subsec:modelchecking}) and give a short introduction into BDDs (\ref{subsec:BDD}), symbolic state-space generation (\ref{subsec:symbStateSpace}), and symbolic representations of transition relations and related work (\ref{subsec:transRelation}). Thereafter, Section \ref{sec:Parametric} presents our new approach to store the transition relation and in Section \ref{sec:Exploration} we exemplify an efficient algorithm to build the AND of an ordinary BDD and our new data structure. Experimental results which demonstrate the efficiency of our new approach can be found in Section \ref{sec:ExpRes}. The paper closes with a conclusion and an outlook to future work.

\section{Background}
\subsection{Asynchronous Concurrent Systems}
\label{subsec:modelchecking}
In this paper we target finite state asynchronous concurrent systems $M^m=(S,R,S_0)$, where $S$ is the finite set of possible states, $S_0 \subseteq S$ is the set of initial states and $R$ is the transition relation. We assume that an asynchronous system $M^m$ is composed of $m>1$ components, and a state $s \in S$ is a  tuple $s=(\vec g, l_{1},..., l_{m})$. Thus, a system state consists of the values $\vec g$ of all global shared variables (not associated with any component) and the $local \, state \, l_{i}$ of each component $i \in \{1,...,m\}$ (i.e. values of all local variables of component $i$). The transition relation is defined as $R=\{(x,x')|$ $x \in S \land  x' \in S \land$ state $ x'$ can be reached from state $x$  in a single step$\}$.

The execution model of a system $M^m$ is that of interleaved asynchrony. Only one component can execute a transition at a time and a transition of a component $i$
only depends on and only changes the values of the shared variables $\vec g$  as well as its own local state $l_i$.  That means, a component has neither read nor write access to local variables of other components. We denote this frequently occurring behavior as \textit{transition locality}. Let $R_{i_{P}}$ be a relation with $R_{i_{P}}=\{((\vec g, l_{i}),(\vec g', l_{i}'))|$  $(\vec g', l_{i}')$ results from $(\vec g, l_{i})$ by executing a single step of \textit{i}$\}$ and let $R_i$ be  the transition relation of component $i$ that contains the transitions executable by component $i$. In systems with transition locality the following holds $\forall i \in \{1,...,m\} : R_{i}=\{((\vec g, l_{1},..., l_{m}),(\vec g', l_{1}',..., l_{m}'))|\forall j \neq i: \: l_{j}'=l_{j} \land  ((\vec g, l_{i}),(\vec g', l_{i}')) \in R_{i_{P}}\}$ and $R=\bigcup_{i \in m}R_i$. An example for this system type is the tremendous importance gaining concurrent software for multi-core architectures with threads which communicate via shared variables. Also the subtype of concurrent software with replicated threads is most relevant in practice. A formal definition of this system type can be found in \cite{KaiserKroeningWahl10DynCuttoff}.

\subsection{Binary Decision Diagram (BDD)}
\label{subsec:BDD}
Decision diagrams are used in symbolic model checking to store sets of states as well as the transition relation of a system.  A \textit{binary decision diagram} (BDD) \cite{GraphBasedAlgorithmsBryant86} for \textit{N}-variables can be used to encode a function $f:\{0,1\}^{N} \mapsto \{0,1\}$.
\vspace*{0.5cm}
\begin{definition}
A BDD is an acyclic directed graph with a single root vertex and two types of vertices, $nonterminal$ vertices and $terminal$ vertices. Each $nonterminal$ vertex $v$ is labeled by a variable $var(v)$ and has two successors $low(v)$ and $high(v)$. A $terminal$ vertex $v$ is labeled by a value $value(v) \in \{0,1\}$.
\end{definition}
\vspace*{0.5cm}
As we did in this paper, most often \textit{reduced ordered binary decision diagrams} ROBDDs \cite{GraphBasedAlgorithmsBryant86} are used. ROBDDs are a canonical representation for boolean functions. Canonicity is achieved by using two restrictions for BDDs. There should be no isomorphic subtrees or redundant vertices in the diagram, and  the variables should appear in the same order along each path from the root vertex to a $terminal$ vertex. The same order for the variables along each path is ensured by using a total ordering $\prec$ on the variables that label the vertices in a BDD. Then $var(u) \prec var(v)$ is required for any vertex $u$ in the diagram that has a $nonterminal$ successor $v$. One can decide whether a particular truth assignment to its variables makes a function represented as a BDD true, or not, by traversing the graph from the root vertex to a terminal vertex. The value of a reached $terminal$ vertex is the value of the function for the given variable assignment.

\subsection{Symbolic State-Space Generation}
\label{subsec:symbStateSpace}
As mentioned in the last section, BDDs are used in symbolic model checking to store sets of states as well as the transition relation of a system. A set of states $Z$ can be encoded with a BDD through its characteristic function $\chi_{Z}$. If the shared states $\vec g$ of an asynchronous system with $m$ components can be encoded with $n_g$ boolean variables and the local states of a component $i$ with $n_{l_{i}}$ boolean variables, then a BDD for $N=n_g + \sum_{i=1}^m n_{l_{i}}$ variables can be used to store sets of system states. To encode the transition relation with a BDD, transitions between states, instead of single states, have to be encoded. Therefore, a BDD for twice as many variables as for BDDs that encode sets of states is necessary and the transition relation can be encoded with a BDD for \textit{2N}-variables. There \textit{N}-variables are needed for the \textit{from}-state and also \textit{N}-variables for the \textit{target}-state of a transition. As BDD variable ordering for the \textit{2N}-variables, all possible permutations are applicable. But it is widely acknowledged that variable ordering with interleaving of the corresponding \textit{from}- and \textit{target}-state variables is often the most efficient variable ordering by terms of nodes required to store the transition relation. Thus, we consider only interleaved variable ordering in this work. In interleaved variable ordering the corresponding \textit{from}- and \textit{target}-state variables are next to each other in a BDD.

This paper targets on forward reachability analysis. There, the image computations are forward images and the forward image for a set of states $Z$ is defined as: $Image(Z)=\{x' | \exists x \in Z, (x,x') \in R\}$. In forward reachability analysis state-space search starts with the set of initial states $S_0$. The set of reachable states is the minimal set satisfying $Z \supseteq S_0$ and $Z \supseteq Image(Z)$ which can be computed through iterated forward image calculations. The traditional approach for symbolic state-space generation, which we also used within this paper, uses breadth-first iterations. Each breadth-first iteration consists of an image computation with the entire transition relation $R$ of a system.  At the $i$th iteration all states with distance less or equal $i$ from the initial states have been explored.

\subsection{Symbolic Representations of Transition Relations and Related Work}
\label{subsec:transRelation}
A monolithic transition relation of a single BDD is often intractably large. Therefore, the use of partitioned transition relations has been proposed in \cite{SymbModCheckWiPartTransRel91}.  Partitioned transition relations consist of conjunctions or disjunctions of a number of pieces of the single BDD. These pieces can often be represented by a small BDD. In this paper we consider asynchronous concurrent systems and use disjunctive partitioned transition relations. A component-wise disjunctively partitioned transition relation for an asynchronous system with $m$ components is composed of the transition relations $R_i$ of the components, and can be written as $R=R_{1} \vee R_{2} \vee ... \vee R_{m}$. In this work we consider only systems with transition locality (see section \ref{subsec:modelchecking}). Our method further reduces the memory requirements of the partitioned transition relation approach through exploiting similarities in the transition relation of the components. For the use of partitioned transition relations, it's worth mentioning that a too fine granulated transition relation may not be the best choice. As long as the BDDs don't become too large, it is better to combine several transitions in one disjunct. In this way, fewer BDD nodes may be needed and also image calculation can possibly be accelerated. In \cite{Ranjan95EfficientBDDalgorithms} the authors presented and investigated an approach where the partitions of partitioned transition relations can consist of several transitions. Their experimental results confirm that larger partitions lead to big runtime savings. But they also observed an increase in the number of BDD nodes for coarser partitioned transition relations. By considering similarities in the transition relations of the components our approach allows to build much coarser partitions of transitions. Additionally, in the presence of  large isomorphic subgraphs no strong increase in the total number of BDD nodes occurs. Thus, our approach can reduce the runtime without causing an increase of the memory requirements.

Transition relations of asynchronous systems often contain many identity patterns. As introduced in \ref{subsec:modelchecking}, if a component $i$ executes a transition in a system with transition locality, then the local states for all other components $j \neq i$ remain unchanged. Therefore, the BDD for the transition relation $R_i$ of component $i$ contains identity patterns for the local state bits of all other components $j \neq i$.  An example of an identity pattern can be found in Figure \ref{fig:identity}. There level $k$ contains a vertex of a \textit{from}-state and level $k+1$ a vertex of the corresponding \textit{target}-state. According to Figure \ref{fig:identity}, if the vertices at level $k$ and $k+1$ get assigned different values, then the BDD evaluates to $0$. That means, if a BDD for a transition contains an identity pattern for a variable, the variable doesn't change its value when the transition is executed. To avoid the memory overhead to store identity patterns, \cite{SatGenClassMod04} introduces an approach which uses reduced matrix diagrams (MxDs) \cite{EffSollMatrixDiagMill01} without identity nodes for the transition relation. The authors of \cite{SatuBasedReachConDis05} suggested to use a new identity reduction rule for MDDs \cite{MultiValuedDecDiag98} to get fully identity reduced MDDs for the transition relation. These papers just present approaches for identity reduction, but no method to use similarities in the transition relations of components. A technique to exploit sharing in BDDs for regular circuits that differ only in their support variables has been presented in \cite{Bryant03SymbRepresOrdFuncTempl}. Similar to our approach a remapping of input variables is used there. But such a remapping can not be used for BDDs of transition relations of components in asynchronous concurrent systems. The reason is different positions of identity patterns in the BDD variable ordering for different components. Additionally, they always expand a BDD with modified input variables before performing a BDD operation. This is very time consuming and can even be intractable for large transition relations. To solve this problem, we present in section \ref{sec:Exploration} an efficient algorithm for boolean operation calculation with our new BDD type, which avoids the expansion to a normal BDD.

\section{Transition Locality Exploiting BDDs (TLEBDDs)}
\label{sec:Parametric}
In this section we present our new approach to store the transition relation of systems with transition locality (see section \ref{subsec:modelchecking}). It makes use of the circumstance that BDDs for subsets of the transition relation may have a very similar structure, if the transition relation is split component-wise in partitioned transition relations. To exploit those similarities and to reduce the memory requirements of transition relations we suggest to use \textit{Transition Locality Exploiting BDDs (TLEBDDs)}. A TLEBDD consists of a normal BDD (see section \ref{subsec:BDD}) and a mapping list. For a system with $m$ components, the transition relation of a component can be represented by a BDD with $2 \cdot N$ variables, where $N=n_g + \sum_{i=1}^m n_{l_{i}}$. In the rest of the paper we assume that the BDD of a TLEBDD for a component $i$ is defined over the variables $X=\{x_1,x_2,...,x_{2 \cdot (n_{g} + n_{l_{i}})}\}$  and the mapping list is defined over the variables $Y=\{y_1,y_2,...,y_{2 \cdot N}\}$. We will denote the variables in $X$ as reduced variables and the variables in $Y$ as actual variables. The mapping list is necessary to map the reduced variables to the actual variables of the corresponding characteristic function $\chi_{R_{i}}$ of $R_{i}$  for which the TLEBDD has been built. For a component $i$ this mapping can be described with a function $\pi:\: \{1,2,..,2 \cdot (n_{g} + n_{l_{i}})\} \rightarrow \{1,2,..,2 \cdot N\}$ that maps mapping list entries to variable indices from $Y$.
\vspace*{0.5cm}
\begin{definition}
A $n$-\textit{mapping list} is a list over $Y$ with $n$ elements, that is \\ $[y_{\pi(1)},...,y_{\pi(n)}]$.
\end{definition}
\vspace*{0.5cm}
According to section \ref{subsec:modelchecking} the transition relation $R_i$ of a component $i$ in a system with transition locality is defined as $R_{i}=\{((\vec g, l_{1},..., l_{m}),(\vec g', l_{1}',..., l_{m}'))|\forall j \neq i: \: l_{j}'=l_{j} \land  ((\vec g, l_{i}),(\vec g', l_{i}')) \in R_{i_{P}}\}$ (see section \ref{subsec:modelchecking} for the definition of $R_{i_{P}}$) and the values of $\vec g'$ and $l_{i}'$ depend only on $\vec g$ and $l_{i}$. TLEBDDs exploit the circumstance that for every transition of a component $i$ holds $\forall j \neq i: l_{j}'=l_{j}$, and no vertices are used in the transition relation of a component $i$ for the local states of an other component $j \neq i$.
\vspace*{0.5cm}
\begin{definition}
 A TLEBDD for the transition relation of a component $i$ in a system with transition locality is a tuple $(G,b)$, where $G$ is a normal BDD and $b$ is a mapping list.
$G$ is a BDD with the $2 \cdot (n_{g}+n_{l_{i}})$ reduced variables $X=\{x_1,x_2,...,x_{2 \cdot (n_{g} + n_{l_{i}})}\}$. They are used for the bits of the shared states ($2 \cdot n_g$ bits) of the system and the local state bits ($2 \cdot n_{l_{i}}$ bits) of the component for which the TLEBDD has been built. For actual variables of the other $n-1$ components a TLEBDD implicitly assumes identity patterns.  The mapping list $b$ contains $n=\{1,2,...,2 \cdot (n_{g} + n_{l_{i}})\}$ elements and is used to map the reduced variables of $G$ to the actual variables. It contains for each position $q \in \{1,2,...,2 \cdot (n_{g} + n_{l_{i}})\}$ in the variable ordering of the BDD $G$ the associated actual variable $y_{\pi(q)}$. Thereby it holds for $q_1,q_2 \in \{1,2,...,2 \cdot (n_{g} + n_{l_{i}})\}$ with $q_1 \neq q_2$ that $y_{\pi(q_1)} \neq y_{\pi(q_2)}$.
\end{definition}
\vspace*{0.5cm}

\begin{figure}[t]
\begin{center}
\begin{minipage}[t]{5.5cm}
\begin{center}
\includegraphics[width=2.7cm]{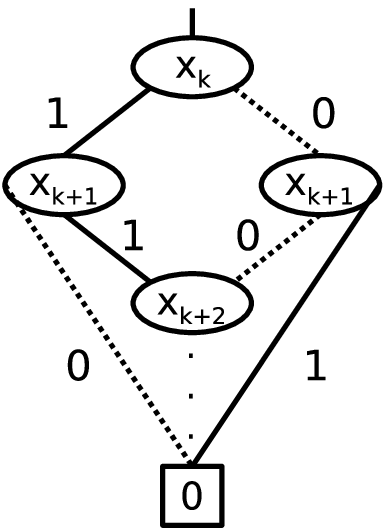}
\end{center}
\caption{Example of an identity pattern}
\label{fig:identity}
\end{minipage}
\begin{minipage}[t]{10cm}
\begin{center}
\includegraphics[width=8.5cm]{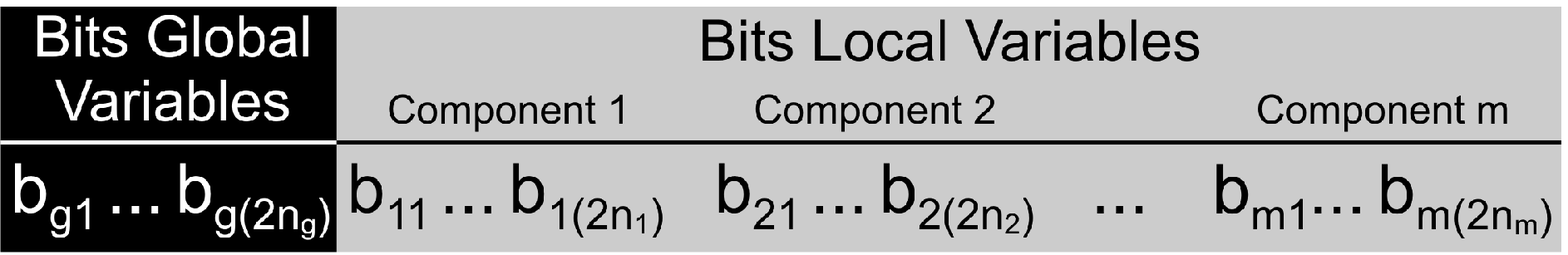}
\end{center}
\caption{Variable Ordering Concatenated}
\label{fig:indices}
\end{minipage}
\end{center}
\end{figure}

TLEBDDs can be used for the efficient representation of component transition relations. A corresponding BDD can be obtained from a TLEBDD $(G,b)$ through substitution of the reduced variables of the TLEBDD with the corresponding actual variables and the insertion of identity patterns. That means,  the TLEBDD $(G,[y_{\pi(1)},...,y_{\pi(n)}])$ and the BDD $\Delta(G[y_{\pi(1)}/x_{1},...,y_{\pi(n)}/x_{n}])$ represent the same function. Here $G[y/x]$ is the substitution of any occurrence of $x$ in $G$ with $y$ and $\Delta$ is an operation which inserts identity patterns for associated pairs of from- and target-state actual variables for which no corresponding reduced variables exist. TLEBDDs use the same reduced variables to represent the local state bits of different components. 
In the prominent special case of asynchronous systems with only one replicated component type, even all corresponding local state bits of the $m$ components can be mapped to the same reduced variables. In this way we get isomorphic subgraphs which aren't isomorphic in BDDs of ordinary partitioned transition relations, because the position of local state bits of components or of identity patterns in the variable ordering differs. This enables us to use the common property of BDD packages like Cudd \cite{Somenzi09Cudd} to store isomorphic subgraphs only once. Our experimental results in section \ref{sec:ExpRes} confirm that this can lead to enormous memory savings. TLEBDDs can be made canonical by requiring that mapping lists are ordered with respect to some strict ordering $\prec$ on the actual variables $Y$.
\vspace*{0.5cm}
\begin{definition}
A $n$-\textit{mapping list} is ordered, if $y_{\pi(i)} \prec y_{\pi(i+1)}$, for all $1 \leq i < n$.
\end{definition}
\vspace*{0.3cm}
\begin{theorem}
If $(G,b_g)$ and $(H,b_h)$ are two TLEBDDs with mapping lists which are ordered with respect to some strict ordering $\prec$ on the actual variables $Y$, then for boolean functions $g,h$ of component transition relations with $g$ represented through $(G,b_g)$ and $h$ represented through $(H,b_h)$, $g=h$ holds, if and only if $G=H$ and $b_g=b_h$.
\end{theorem}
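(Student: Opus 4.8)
The plan is to derive the statement from the canonicity of ordinary ROBDDs (Section~\ref{subsec:BDD}), using the representation identity stated just before the theorem: a TLEBDD $(G,b)$ with $b=[y_{\pi(1)},\dots,y_{\pi(n)}]$ denotes exactly the function of the ordinary ROBDD $\Delta(G[y_{\pi(1)}/x_1,\dots,y_{\pi(n)}/x_n])$. Write $G^{\ast}=\Delta(G[b_g])$ and $H^{\ast}=\Delta(H[b_h])$ for these two expansions over the actual variables $Y$ in the fixed interleaved order $\prec$. The backward direction is immediate: if $G=H$ and $b_g=b_h$ then $(G,b_g)$ and $(H,b_h)$ are the same tuple, so they denote the same function and $g=h$.

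For the forward direction I would assume $g=h$ and first invoke canonicity. Since $G^{\ast}$ and $H^{\ast}$ are ordinary ROBDDs over $Y$ representing $g$ and $h$ respectively, $g=h$ forces $G^{\ast}=H^{\ast}$ as reduced ordered graphs. It then suffices to prove that the expansion map $(G,b)\mapsto\Delta(G[b])$ is injective on TLEBDDs with $\prec$-ordered mapping lists, i.e. that $(G,b_g)$ and $(H,b_h)$ can be read back off uniquely from the common graph $G^{\ast}=H^{\ast}$.

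I would recover the mapping list first. Since it is ordered, $b_g$ is determined by its underlying set $A_g\subseteq Y$ of actual variables, and $A_g$ is the complement of the from/target pairs on which $\Delta$ has forced an identity pattern in the sense of Figure~\ref{fig:identity}. The distinguishing feature of a $\Delta$-inserted pair $(y_{2k-1},y_{2k})$ is that, at the position fixed by the concatenated ordering of Figure~\ref{fig:indices}, the function factors as ``$y_{2k-1}=y_{2k}$'' conjoined with a part wholly independent of that pair --- precisely the transition-locality shape $l_j'=l_j$ with no dependence on $l_j$. Reading this structure off the common graph $G^{\ast}=H^{\ast}$ yields the same inserted pairs for both, hence $A_g=A_h$ and $b_g=b_h=:b$. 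Finally, with the mapping list common to both, the renaming $[b]$ is a fixed injective substitution and $\Delta$ a fixed insertion of identity blocks, both invertible; stripping them from $G^{\ast}=H^{\ast}$ gives $G[b]=H[b]$ and therefore $G=H$.

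The hard part will be the recoverability of the mapping list: I must argue that a $\Delta$-inserted identity pattern cannot be mimicked by genuine dependence coming from $G$, so that the active block and the identity block are unambiguously separated in $G^{\ast}$. The delicate case is a reduced variable of $G$ that enters only through an \emph{equate-and-forget} pattern, which is structurally identical to a $\Delta$-inserted identity and would occur in $G^{\ast}$ in the same shape. I expect to resolve this by using the fixed block positions supplied by the concatenated ordering of Figure~\ref{fig:indices}: for a component transition relation the foreign-component identities occupy a prescribed block of positions disjoint from the shared and own-local positions on which the body $G$ lives, so the two kinds of identity cannot share a position; if a cleaner argument is wanted, one restricts to TLEBDDs whose body genuinely depends on each of its reduced variables, under which the confusion disappears entirely.
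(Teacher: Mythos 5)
Your plan follows essentially the same route as the paper's own proof: both directions are reduced to ordinary ROBDD canonicity through the expansion $(G,b)\mapsto\Delta(G[b])$, the backward direction being immediate and the forward direction resting on injectivity of that expansion. The difference is that you make explicit what the paper hides: its forward direction simply asserts that ``the same actual variables have to be mapped to reduced variables,'' which is precisely the recoverability-of-the-mapping-list claim you isolate as the hard part. And that hard part is not a technicality you can expect to discharge --- it is a genuine counterexample to the theorem as literally stated. Take two distinct components $i\neq j$ whose transitions never read nor modify their own local state and whose effect on the shared variables coincides (the system model of Section \ref{subsec:modelchecking} permits this, since a transition of $i$ ``only depends on and only changes'' $\vec g$ and $l_i$ without being required to actually use $l_i$). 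Then $R_i=R_j$ as functions over $Y$: both equal the shared behaviour conjoined with identity on every local variable. Yet by Definition 3 the TLEBDD of component $i$ must carry the own-local identity patterns of $i$ explicitly inside $G$, with a mapping list covering the shared and $i$-local actual variables, while component $j$'s covers the shared and $j$-local ones; so $g=h$ while $b_g\neq b_h$. Any proof of the forward direction, including the paper's, silently excludes this case.

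This also settles the fate of your two proposed repairs. The first one (reading the block boundaries off the concatenated ordering of Figure \ref{fig:indices}) is circular: the blocks are prescribed by the component index, and which component a TLEBDD belongs to is exactly the information that the function fails to determine in the counterexample above, so there is nothing in the common graph $G^{\ast}=H^{\ast}$ from which to recover the boundary between own-local and foreign-local positions. The second repair is the right one, but it is mis-phrased: requiring that $G$ ``genuinely depends on each of its reduced variables'' does not exclude the bad case, because a bare identity pattern $x'=x$ does depend on both of its variables in the Boolean sense. The correct non-degeneracy condition is that no mapped from/target pair carries a bare identity, i.e.\ the represented relation does not factor, for any pair in the mapping list, as ($y'=y$) conjoined with a function independent of that pair --- equivalently, that TLEBDDs are required to be \emph{fully} identity-reduced, own-local identities included. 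Under that convention your recovery argument is sound: the $\Delta$-inserted pairs are exactly the pairs on which the function factors as identity-and-independent, hence $b_g=b_h$, and stripping the injective renaming and insertion from $G^{\ast}=H^{\ast}$ yields $G=H$. With that amendment your proof is correct, and it is more honest than the paper's, which needs the same amendment without saying so.
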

\begin{proof}
Let $b_g=b_h=[y_{\pi(1)},...,y_{\pi(n)}]$. By expansion of the TLEBDDs we get $g_{exp}=$
\newline 
$\Delta(G[y_{\pi(1)}/x_{1},...,y_{\pi(n)}/x_{n}])$ and $h_{exp}=\Delta(H[y_{\pi(1)}/x_{1},...,y_{\pi(n)}/x_{n}])$, where $\Delta$ is defined as introduced before. Because $G=H$, we get $g_{exp}=h_{exp}$ and therefore holds $g=h$.
\newline
Be now $g=h$. Because the mapping lists have to be strictly ordered and the same actual variables have to be mapped to reduced variables, there is only one unique ordered mapping list. Thus $b_g=b_h$ holds. If $G \neq H$ would hold, then the TLEBDDs $(G,b_g)$, and $(H,b_h)$ respectively, have to have different mapping lists $b_g$ and $b_h$ that $g=h$ can be valid. Therefore also $G=H$ holds.
\end{proof}
\vspace*{0.5cm}
A TLEBDD can be built for a component $i$ through encoding of the relation $R_{i_{P}}$ by using the reduced variables instead of the actual variables. Additionally the mapping of the $2\cdot(n_g+n_{l_{i}})$ reduced variables to the $2N=2 \cdot (n_g + \sum_{i=1}^m n_{l_{i}})$ actual variables has to be stored in the mapping list. To evaluate the truth value of a particular assignment of values to the variables of a TLEBDD, its BDD has to be traversed from the root vertex to a terminal vertex similar to a BDD.
Additionally, during its traversal the information which has been stored in the mapping list has to be considered to map the reduced variables to their corresponding actual variables and to take into account the missing identity patterns.

To use TLEBDDs and ordinary BDDs for model checking, it's necessary that they can be combined through boolean operations. An approach that allows the use of the traditional BDD algorithms to combine a TLEBDD and a BDD is to adapt the TLEBDD variable ordering to the variable ordering of the BDD and to insert simultaneously the omitted identity patterns. Though this works, here the uncompressed BDD has to be built for a TLEBDD. This would cause an additional runtime overhead, which can sometimes be very large. Also, if this BDD is huge a lot of memory may be required. In the worst case this can lead to an abort of the subsequent forward image calculation and therewith the model checking run. Therefore, we developed an effective algorithm for the calculation of boolean operations which avoids to generate normal BDDs for TLEBDDs entirely. In this way the vertices of  a corresponding BDD for a TLEBDD are not needed at all, and we achieve the maximum possible memory reduction.

\section{Efficient Algorithm for Boolean Operation Calculation}
\label{sec:Exploration}
Here, we exemplify an efficient algorithm to compute the AND of a TLEBDD and a BDD. The AND of two BDDs is a very important step in forward image computation, because in every forward image computation the AND of the BDD with states which still have to be explored and the transition relation has to be calculated. Listing \ref{lst:AndBDDs} sketches our new algorithm which allows to build the AND of a TLEBDD and a BDD without building the corresponding normal BDD for the TLEBDD at all. Prior to the execution of the algorithm the  variable ordering of the reduced variables of the TLEBDD has to be adapted according to the variable ordering of the BDD .
\vspace*{0.5cm}
\begin{center}
\begin{minipage}[t]{12.5cm}
\begin{lstlisting}[language=C++,numbers=left,showspaces=false,showtabs=false,showstringspaces=false,captionpos=t,caption=Recursively  compute the AND of a TLEBDD and a normal BDD,label=lst:AndBDDs,basicstyle=\ttfamily\scriptsize,mathescape]
ANDRecursive(TLEBDDVertex TLEroot, BDDVertex BDDroot, int actualVarTLE){
   BDDVertex result = TERMINAL_CASE(TLEroot,BDDroot,actualVarTLE);
   if(result != NULL){
      return  result;} //terminal case found
   result = COMPUTED_TABLE_HAS_ENTRY(AND,TLEroot,BDDroot,actualVarTLE);
   if(result != NULL){
      return result;}  //result has already been calculated before

   if(BDDroot.variable $\prec$ actualVarTLE){
      $v$ = BDDroot.variable; 
      T = ANDRecursive(TLEroot,BDDroot$_v$,actualVarTLE);
      E = ANDRecursive(TLEroot,BDDroot$_{\overline{v}}$,actualVarTLE);}
   else{
      $v$ = actualVarTLE;
      $w$ = TLEroot.variable; 
      TLEroot$_w$ = getNextVertex(TLEroot,TLEroot$_w$,actualVarTLE);
      actualVarNew$_w$ = getNextVertexVar(TLEroot,TLEroot$_w$,actualVarTLE);
      TLEroot$_{\overline{w}}$ = getNextVertex(TLEroot,TLEroot$_{\overline{w}}$,actualVarTLE);
      actualVarNew$_{\overline{w}}$ = getNextVertexVar(TLEroot,TLEroot$_{\overline{w}}$,actualVarTLE);
      T=ANDRecursive(TLEroot$_w$,BDDroot$_v$,actualVarNew$_w$);
      E=ANDRecursive(TLEroot$_{\overline{w}}$,BDDroot$_{\overline{v}}$,actualVarNew$_{\overline{w}}$);}

   if(T == E) return T;
   R = FIND_OR_GENERATE_AND_ADD_UNIQUE_TABLE(v,T,E);
   INSERT_COMPUTED_TABLE((AND,TLEroot,BDDroot,actualVarTLE),R);
   return R;}
\end{lstlisting}
\end{minipage}
\end{center}
\vspace*{0.5cm}
One main difference of the algorithm in Listing \ref{lst:AndBDDs} to the usual AND algorithm is the use of a variable \textit{actualVarTLE} for the current actual variable of a TLEBDD vertex. This variable is necessary to achieve that only those TLEBDD and BDD vertices are evaluated together that would also be evaluated together if the AND would be done between two ordinary BDDs. In line 2 of the algorithm it is detected if a terminal case of the recursive computation has been reached. If a terminal vertex is reached in a normal BDD, then its value is the value of the represented function for the variable assignment that led to this terminal vertex. In our algorithm a terminal vertex of a TLEBDD is really a terminal vertex, if its value is $0$. If its value is $1$, possibly missing identity patterns have to be evaluated before the terminal vertex is valid. This problem can be solved by using the value of  \textit{actualVarTLE} to decide the validity of such terminal vertices during the detection of terminal cases. The value of \textit{actualVarTLE} also has to be considered during computed table accesses (see lines 5 and 25). This has to be done because different partial results of the AND operation can occur with the same TLEBDD and BDD vertices. By considering the value of \textit{actualVarTLE} these partial results can be differentiated.

\vspace*{0.5cm}
\begin{center}
\begin{minipage}[b]{14cm}
\begin{lstlisting}[language=C++,numbers=left,showspaces=false,showtabs=false,showstringspaces=false,captionpos=t,caption=Compute a successor vertex of  the current TLEroot in a TLEBDD,label=lst:NextNode,basicstyle=\ttfamily\scriptsize,mathescape]
getNextVertex(TLEBDDVertex TLEroot, TLEBDDVertex TLEroot$_{succ}$,int actualVarTLE){
   TLEBDDVertex TLEroot$_{new}$ = TLEroot$_{succ}$;
 
   if(isTerminalVertex(TLEroot) || 
     (actualVarTLE $\prec$ mappingList[TLEroot.variable])){
       TLEroot$_{new}$ = TLEroot;}
 
   return TLEroot$_{new}$;}
\end{lstlisting}
\end{minipage}
\end{center}
Line 9 decides which of the two decision diagrams has the top variable in the used variable ordering at a step of the recursion. Adjustments to  \textit{actualVarTLE} and \textit{TLEroot} for recursive calls of ANDRecursive have to be done only if \textit{actualVarTLE} is the current top variable. Otherwise, its value is kept because the current root of the TLEBDD corresponds to an actual variable which has to be evaluated later. In the \textit{else} path the new values of  \textit{actualVarTLE} (\textit{actualVarNew$_w$} and \textit{actualVarNew$_{\overline{w}}$}) as well as \textit{TLEroot} (\textit{TLEroot$_w$} and \textit{TLEroot$_{\overline{w}}$}) have to be determined according to the current value of \textit{actualVarTLE} and the mapping of the reduced variables of the TLEBDD vertices into the BDD variable ordering. Thereby the values of the new \textit{TLEroot}s are calculated with the function \textit{getNextVertex()} (see lines 16 and 18) and the new values of \textit{actualVarTLE} are calculated with the function \textit{getNextVertexVar()} (see lines 17 and 19). In the function \textit{getNextVertex()} (see Listing \ref{lst:NextNode})  \textit{TLEroot} has to keep its value, if it is already a terminal node, or if the value of  \textit{actualVarTLE} is before the  actual variable that corresponds to the reduced variable of TLEroot in the variable ordering. This is necessary, because of the missing identity patterns in a TLEBDD, and \textit{TLEroot} has to be evaluated later in the variable ordering.  Otherwise \textit{getNextVertex()} returns the successor \textit{TLEroot$_{succ}$} as the new root of the TLEBDD. The new value of \textit{actualVarTLE} is calculated by the function \textit{getNextVertexVar()} (see Listing \ref{lst:NextNodeVariable}). If the successor vertex \textit{TLEroot$_{succ}$} is a terminal vertex with value $0$, then the terminal vertex can be evaluated immediately and \textit{actualVarTLE} gets the value for a terminal vertex (see line 6). Otherwise, the function \textit{identityPatternBeforeSuccVertex()} detects if there is an actual variable for an identity pattern between \textit{actualVarTLE} and the corresponding actual variable of \textit{TLEroot} in the variable ordering. If there is such an actual variable, the function \textit{getNextActualIdentityPatternCurrVar()} calculates the next occurring actual variable of an identity pattern for a from-state and  \textit{actualVarNew} is set to this value.

These calculations can be done with the help of the mapping list and the parameter values of the functions \textit{identityPatternBeforeTLEroot()}, and \textit{getNextActualIdentityPatternCurrVar()} respectively. If no actual variable for an identity pattern exists in the variable ordering before the corresponding actual variable of \textit{TLEroot},  \textit{actualVarNew} can be set to a value for a terminal vertex if \textit{TLEroot} is a terminal vertex. When \textit{TLEroot} is no terminal vertex, \textit{actualVarNew} is set to the value of an actual variable for an identity pattern before \textit{TLEroot$_{succ}$} or to the actual variable that corresponds to the formal variable of \textit{TLEroot$_{succ}$}. By setting the value of \textit{actualVarNew} to the first variable of every occurring identity pattern, we achieve that the recursion definitely holds at each such variable. The impact of the missing identity patterns then can be considered at these recursion steps.
\vspace*{0.5cm}
\begin{center}
\begin{minipage}[t]{14cm}
\begin{lstlisting}[language=C++,numbers=left,showspaces=false,showtabs=false,showstringspaces=false,captionpos=t,caption=Compute a new value for \textit{actualVarTLE},label=lst:NextNodeVariable,basicstyle=\ttfamily\scriptsize,mathescape]
getNextVertexVar(TLEBDDVertex TLEroot, TLEBDDVertex TLEroot$_{succ}$,int actualVarTLE){
   int actualVarNew;
        
   if((TLEroot$_{succ}$.index==CONST_INDEX) && (TLEroot$_{succ}$.value==0)){ 
   {//a terminal vertex with value $0$ can be evaluated immediately
       actualVarNew = CONST_INDEX;}
   else{
      //decide if there is an identity pattern before TLEroot 
      //that has to be evaluated
      if(identityPatternBeforeTLEroot(TLEroot,actualVarTLE)==TRUE){
         actualVarNew = 
             getNextActualIdentityPatternCurrVar(TLEroot,actualVarTLE);}
      else{
         if(TLEroot.index==CONST_INDEX){ 
           actualVarNew = CONST_INDEX;}     
         else{
           if(identityPatternBeforeTLEroot(TLEroot$_{succ}$,actualVarTLE)==TRUE){
             actualVarNew = 
              getNextActualIdentityPatternCurrVar(TLEroot$_{succ}$,actualVarTLE);}
           else{
              actualVarNew = mappingList[TLEroot$_{succ}$.variable];}}}}}    

   return actualVarNew;}
\end{lstlisting}
\end{minipage}
\end{center}
\vspace*{0.5cm}
If a step of the recursion has finished, the calculated subgraphs $T$ and $E$ have to be combined and the result has to be returned. The return value is determined in lines 23 and 24 of Listing \ref{lst:AndBDDs}. If the top variable of the recursion step isn't a variable for an identity pattern, the return value can be calculated as it is done in the algorithm for the AND between two normal BDDs. When the top variable is a variable for an identity pattern, the recursion definitely holds at this recursion step and the variable is a from-state variable of the identity pattern. Here the impact of the missing identity patterns to the result of an AND operation is taken into account. Figure \ref{fig:resultC} illustrates the effect of identity patterns on the result calculation. In principle three different cases have to be considered. They are marked with a, b, and c, and $x_k$ is the top and also from-state variable of an identity pattern. For each case the Figure shows in the left the result of the recursion at this step if the AND had been calculated with identity patterns. On the right side the result which our algorithm returns for TLEBDDs is shown. Except for the first case (a), two subgraphs are shown as solutions for our algorithm. There are two different subgraphs because of different optimizations that we used. Generally, after forward image calculations first the from-state variables are existentially abstracted and after that the target-state variables are shifted to their corresponding from-state variables. This is done with two different functions calls. Beneath the image calculation itself, these functions often need a lot of runtime. If TLEBDDs are used the abstraction of the from-state variables can be done easily and with little runtime overhead for variables for identity patterns. To do this there have to be inserted no vertices for the from-state level but only the correct remaining subgraph without the from-state vertex has to be built. Therefore we developed a version where from-state variables for identity patterns are abstracted away immediately. The outcome of the result combination with this immediate abstraction are captioned with \textit{exist abst.} in Figure \ref{fig:resultC}. Also we observed that the shift to the from-state variables often needs a lot of runtime. We developed a second method for result combination, where the target-state variables are immediately shifted to their corresponding from-state variables. This can be done easily for identity patterns. For the verification experiments we implemented the immediate shift for all variables. For non identity variables there is more work to do to get the correct subgraphs. As our experimental results show, the immediate shift leads to very large runtime and memory improvements. Thus interleaved variable ordering is very efficient in combination with the immediate shift to the from-state variables. In the first case (a) the target-state vertices at level $k+1$ have as one successor the same subgraph $T$. This corresponds to the case  where $T$ and $E$ are equal in our algorithm (see line 23 in Listing \ref{lst:AndBDDs}). When $T$ equals $E$, the subgraph $T$ can be returned regardless if an immediate abstraction of the from-state variables or an immediate shift to the from-state variables was done. If $T$ and $E$ are not equal, the result is calculated in line 24. Here two different cases can occur  in the presence of identity patterns. The one is numbered with (b) in Figure \ref{fig:resultC} and there different subgraphs $T$ and $E$ exist for the identity paths. After abstraction of the from-state variables the subgraph with root variable $x_{k+1}$ is the correct result. If an immediate shift is done, the result is the subgraph with root $x_{k}$.
In the last case (c) only a system state exists for the value $1$ of $x_k$ for the current variable assignment (the same behavior can occur with value $0$ for $x_k$). Here our algorithm also returns the subgraph with root $x_{x+1}$ or $x_{x}$ in dependency of the chosen result combination strategy. After the result has been calculated for a recursion step, it is inserted into the computed table (see line 25) and returned.
\begin{figure}[t]
\centering
\includegraphics[width=14cm]{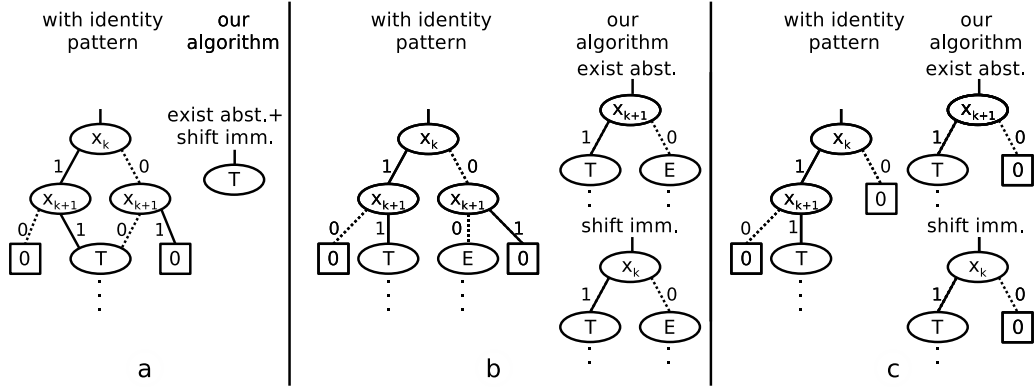}
\caption{Handling of identity patterns during combination of the subgraphs $T$ and $E$}
\label{fig:resultC}
\end{figure}

\section{Experimental Results}
\label{sec:ExpRes}
In this section we present the results of our verification experiments. The experiments run on an Intel Pentium Core 2 CPU with 2.4 GHz and 3 GB main memory by using a single core. The verification experiments have been done with an adapted version of the symbolic model checker Sviss \cite{WahlBlancEmerson08Sviss}, which uses the Cudd BDD package \cite{Somenzi09Cudd}. For our experiments we have chosen the variable ordering concatenated for the bits of the components in the BDDs, because it is efficient for asynchronous systems. Figure \ref{fig:indices} presents this variable ordering. The first bits in this variable ordering are  $b_{g1}$ to $b_{g(2n_g)}$. They denote the from-state and target-state bits for the shared variables of a system state. The bits $b_{i j}$ denote the $j$th bit of component $i$. All experiments have been done with partitioned transition relations with identical sets of transitions in every partition for the different transition relation types. All testcases describe asynchronous systems with replicated components. In the following tables the number of replicated components can be found in the column \textit{Problem} after the name of the verification benchmark. \textit{Number of BDD Nodes} is the  largest number of live BDD nodes that appeared during a verification experiment. This is the memory bottleneck of a verification experiment, because the model checker has to store this number of BDD nodes to finish verification successfully \cite{YangBryantBiere98CavBDDPerformanceStudy}. \textit{Time} is the runtime of a verification experiment, where s, m and h are abbreviations for seconds, minutes and hours. In Table \ref{table:ForwardReachability} we show experimental results for forward reachability analysis. Experimental results for a standard partitioned transition relation, a TLEBDD as transition relation (from-state variables are abstracted immediately for identity patterns here (see section \ref{sec:Exploration})) and a TLEBDD as transition relation where we immediately shift the target-state variables to its corresponding from-state variables are presented there. With the immediate shift we achieve significant runtime improvements and the memory gain can be maximized. One reason for the memory gain is that vertices which can be saved in a TLEBDD are not needed for the intermediate result BDD before the shift to the from-state variables. For the experiments in Table \ref{table:ForwardReachability} we used a timeout of $24$ hours.

\begin{table}[htbp]
\caption{Verification results for forward reachability analysis}
\centering
\begin{tabular}{|c|c|c|c|c|c|c|} \hline
\multicolumn{1}{|V{2cm}|}{} & \multicolumn{2}{V{3cm}|}{Ordinary Partitioned Transition Relation} & \multicolumn{2}{V{3cm}|}{Transition Relation with TLEBDDs} & \multicolumn{2}{V{3cm}|}{Transition Relation with TLEBDDs and shift to target-state immediately} \\ \hline
\multicolumn{1}{|V{2cm}|}{Problem} & \multicolumn{1}{V{1.8cm}|}{Number of BDD Nodes} & \multicolumn{1}{V{1.2cm}|}{Time} & \multicolumn{1}{V{1.8cm}|}{Number of BDD Nodes} & \multicolumn{1}{V{1.2cm}|}{Time} & \multicolumn{1}{V{1.8cm}|}{Number of BDD Nodes} & \multicolumn{1}{V{1.2cm}|}{Time}\\ \hline
\multicolumn{1}{|V{2cm}|}{MutexLocal 5} & \multicolumn{1}{r|}{252,176} & \multicolumn{1}{r|}{34s} & \multicolumn{1}{r|}{176,577} & \multicolumn{1}{r|}{29s} & \multicolumn{1}{r|}{140,808} & \multicolumn{1}{r|}{3s}\\ \hline
\multicolumn{1}{|V{2cm}|}{MutexLocal 7} & \multicolumn{1}{r|}{6,618,487} & \multicolumn{1}{r|}{47:59m} & \multicolumn{1}{r|}{4,977,342} & \multicolumn{1}{r|}{44:05m} & \multicolumn{1}{r|}{4,090,041} & \multicolumn{1}{r|}{5:37m}\\ \hline
\multicolumn{1}{|V{2cm}|}{MutexLocal 8} & \multicolumn{1}{r|}{41,448,929} & \multicolumn{1}{r|}{7:45h}  & \multicolumn{1}{r|}{31,092,345} & \multicolumn{1}{r|}{7:10h} & \multicolumn{1}{r|}{25,704,013} & \multicolumn{1}{r|}{51:37m}\\ \hline
\multicolumn{1}{|V{2cm}|}{Peterson 5} & \multicolumn{1}{r|}{1,470,096} & \multicolumn{1}{r|}{6:32m} & \multicolumn{1}{r|}{720,661} & \multicolumn{1}{r|}{5:28m} & \multicolumn{1}{r|}{577,274} & \multicolumn{1}{r|}{49s}\\ \hline
\multicolumn{1}{|V{2cm}|}{Peterson 6} & \multicolumn{1}{r|}{11,051,785} & \multicolumn{1}{r|}{3:20h}  & \multicolumn{1}{r|}{8,562,251} & \multicolumn{1}{r|}{3:20h} & \multicolumn{1}{r|}{6,344,196} & \multicolumn{1}{r|}{24:40m}\\ \hline
\multicolumn{1}{|V{2cm}|}{Peterson 7} & \multicolumn{1}{r|}{$>$100,000,000} & \multicolumn{1}{r|}{$>$24h} & \multicolumn{1}{r|}{$>$100,000,000} & \multicolumn{1}{r|}{$>$24h} & \multicolumn{1}{r|}{89,401,785} & \multicolumn{1}{r|}{10:46h}\\ \hline
\multicolumn{1}{|V{2cm}|}{CCP 5} & \multicolumn{1}{r|}{205,449} & \multicolumn{1}{r|}{1:02m} & \multicolumn{1}{r|}{172,964} & \multicolumn{1}{r|}{57s} & \multicolumn{1}{r|}{117,875} & \multicolumn{1}{r|}{4s}\\ \hline
\multicolumn{1}{|V{2cm}|}{CCP 8} & \multicolumn{1}{r|}{9,840,064} & \multicolumn{1}{r|}{4:49h} & \multicolumn{1}{r|}{9,118,465} & \multicolumn{1}{r|}{4:35h} & \multicolumn{1}{r|}{5,855,155} & \multicolumn{1}{r|}{16:23m}\\ \hline
\multicolumn{1}{|V{2cm}|}{CCP 10} & \multicolumn{1}{r|}{$>$75,000,000} & \multicolumn{1}{r|}{$>$24h} & \multicolumn{1}{r|}{$>$75,000,000} & \multicolumn{1}{r|}{$>$24h} & \multicolumn{1}{r|}{67,822,819} & \multicolumn{1}{r|}{12:32h}\\ \hline
\multicolumn{1}{|V{2cm}|}{DP 15} & \multicolumn{1}{r|}{309,329} & \multicolumn{1}{r|}{3:55m} & \multicolumn{1}{r|}{294,403} & \multicolumn{1}{r|}{3:50m} & \multicolumn{1}{r|}{193,402} & \multicolumn{1}{r|}{16s}\\ \hline
\multicolumn{1}{|V{2cm}|}{DP 20} & \multicolumn{1}{r|}{3,614,204} & \multicolumn{1}{r|}{2:27h} & \multicolumn{1}{r|}{3,539,148} & \multicolumn{1}{r|}{2:27h} & \multicolumn{1}{r|}{2,267,828} & \multicolumn{1}{r|}{8:41m}\\ \hline
\multicolumn{1}{|V{2cm}|}{DP 22} & \multicolumn{1}{r|}{9,595,403} & \multicolumn{1}{r|}{9:06h} & \multicolumn{1}{r|}{9,446,319} & \multicolumn{1}{r|}{9:11h} & \multicolumn{1}{r|}{6,018,632} & \multicolumn{1}{r|}{32:34m}\\ \hline\end{tabular}
\label{table:ForwardReachability}
\end{table}

The first benchmark in Table \ref{table:ForwardReachability} is an extended simple Mutual Exclusion Algorithm. There, a critical section exists which can be reached by a component if a shared variable points to it. This benchmark has also other shared variables. They store for every control state of the components the number of components currently being in this control state. Additionally, every component has one local variable which stores the number of components currently being in the new control state when a component moves its control state. Our experimental results show that  big memory improvements can be achieved by using our TLEBDD to store the transition relations and we also see slight runtime improvements. The runtime improvements occur with TLEBDDs because we don't have to walk through edges of identity patterns in the recursion by using our new algorithm ANDRecursive. If we additionally shift the state variables immediately to the corresponding from-state variables, we even get further memory reductions and also large runtime improvements. The second testcase in Table \ref{table:ForwardReachability} is the Peterson Mutual Exclusion Protocol \cite{Peterson81}. It is a protocol where entry to the critical section is gained by a single process via a series of $n-1$ competitions. There is at least one loser for each competition and the protocol satisfies the mutual exclusion condition, since at most one process can win the final competition. Table \ref{table:ForwardReachability} shows that we achieve significant memory gains by just using  TLEBDDs as transition relation. If we additionally shift the state variables immediately, we can further reduce the peak number of live nodes and we get very large runtime improvements. Table \ref{table:ForwardReachability} also shows experimental results for the CCP Cache Coherence Protocol. It refers to a cache coherence protocol developed from S. German (see for example \cite{PnueliRuahZuck01AutomDeducVer}). As our experimental results show, we can slightly reduce the memory requirements by using a TLEBDD. When we shift the state variables immediately,  we get significant additional memory and runtime improvements. The last testcase in Table \ref{table:ForwardReachability} is the Dining Philosophers Problem (mentioned DP in Table \ref{table:ForwardReachability}). Our implementation is an imitation of the monitors solution from \cite{PrincipConcurProgBenAri06}. The experimental results show that the memory requirements can not be reduced very much by using TLEBDDs. Also a little runtime increase can be observed for $22$ components. This runtime increase can presumably be eliminated by optimizing the cache utilization. Nevertheless, significant memory and runtime savings can be observed again when we shift the state variables immediately.

\begin{table}[htbp]
\caption{Experimental results for building only the transition relation}
\centering
\begin{tabular}{|c|c|c|c|} \hline
\multicolumn{1}{|V{3cm}|}{} & \multicolumn{1}{V{3cm}|}{Ordinary Partitioned Transition Relation} & \multicolumn{1}{V{3cm}|}{Transition Relation only identity reduced} & \multicolumn{1}{V{2.4cm}|}{Transition Relation with TLEBDDs} \\ \hline
\multicolumn{1}{|V{3cm}|}{Problem} & \multicolumn{1}{V{3cm}|}{Number of BDD Nodes} &  \multicolumn{1}{V{3cm}|}{Number of BDD Nodes} &  \multicolumn{1}{V{2.4cm}|}{Number of BDD Nodes} \\ \hline
\multicolumn{1}{|V{3cm}|}{MutexLocal 75} & \multicolumn{1}{r|}{115,735,537} & \multicolumn{1}{r|}{10,105,558} &  \multicolumn{1}{r|}{141,623} \\ \hline
\multicolumn{1}{|V{3cm}|}{MutexLocal 255} & \multicolumn{1}{r|}{mem ov} & \multicolumn{1}{r|}{77,576,543} &  \multicolumn{1}{r|}{320,755} \\ \hline
\multicolumn{1}{|V{3cm}|}{MutexLocal 2047} & \multicolumn{1}{r|}{mem ov} & \multicolumn{1}{r|}{mem ov} &  \multicolumn{1}{r|}{3,414,118} \\ \hline
\multicolumn{1}{|V{3cm}|}{Peterson 8} & \multicolumn{1}{r|}{110,560,066} & \multicolumn{1}{r|}{47,415,495} &  \multicolumn{1}{r|}{17,403,225} \\ \hline
\multicolumn{1}{|V{3cm}|}{Peterson 9} & \multicolumn{1}{r|}{mem ov} & \multicolumn{1}{r|}{115,675,330} &  \multicolumn{1}{r|}{40,089,105} \\ \hline
\multicolumn{1}{|V{3cm}|}{Peterson 10} & \multicolumn{1}{r|}{mem ov} & \multicolumn{1}{r|}{mem ov} &  \multicolumn{1}{r|}{90,597,275} \\ \hline
\multicolumn{1}{|V{3cm}|}{CCP 18} & \multicolumn{1}{r|}{74,758,155} & \multicolumn{1}{r|}{74,728,268} &  \multicolumn{1}{r|}{9,840,393} \\ \hline
\multicolumn{1}{|V{3cm}|}{CCP 19} & \multicolumn{1}{r|}{mem ov} & \multicolumn{1}{r|}{mem ov} &  \multicolumn{1}{r|}{19,671,729} \\ \hline
\multicolumn{1}{|V{3cm}|}{CCP 21} & \multicolumn{1}{r|}{mem ov} & \multicolumn{1}{r|}{mem ov} &  \multicolumn{1}{r|}{78,656,120} \\ \hline
\end{tabular}
\label{table:TransRelMax}
\end{table}
Table \ref{table:TransRelMax} shows experimental results about the maximum number of components for which the transition relation can be built alone with different transition relation types. We there present experimental results for a standard partitioned transition relation, a partitioned transition relation which is only identity reduced and for a transition relation with TLEBDDs. As our experimental results show, the number of components for which the transition relation can be built can always be enlarged by using TLEBDDs. If we use only identity reduction, we can not increase the number of components as large as with TLEBDDs and we even don't get an increase in the number of components for the CCP testcase. This shows the efficiency of our TLEBDD approach. We omitted the experimental results for the dining philosophers testcase here, because it only has a small transition relation that can already be build with an ordinary partitioned transition relation for more than $1000$ components.

\section{Conclusion and Outlook}
In this paper we presented a new approach to store the transition relation of asynchronous systems. Our approach exploits the common property of BDD packages to store isomorphic subgraphs only once. The presented experimental results confirm that our approach can lead to big memory savings. This allows the verification of larger systems. Additionally, our method can enlarge the parts of the transition relation which can be stored in a single partition of a partitioned transition relation. In this way fewer nodes may be needed and verification can possibly be accelerated. Additionally, we presented a new algorithm to combine BDDs and TLEBDDs efficiently. As our experimental results confirm, an immediate shift to the from-state variables leads to very large runtime and memory reductions for interleaved variable orderings by using this new algorithm.

In the future we intend to investigate the usage of TLEBDDs for storing the transition relation with other state-space exploration algorithms than the traditional breadth-first algorithm. By using other algorithms, like e.g. breadth-first generation with chaining, or the saturation algorithm, possibly even greater memory savings may occur. To investigate the performance of the use of TLEBDDs with other verification benchmarks and state-space exploration algorithms we intend to implement their usage for the symbolic model checker NuSMV \cite{CimattiClarkeCAV02NuSMV2}. Also, we will try to investigate the consequences of different TLEBDD variable orderings on the memory requirements and the verification runtime.

\bibliographystyle{eptcs}
\bibliography{literatur}
\end{document}